
\documentclass[runningheads]{llncs}

\usepackage[utf8]{inputenc}
\usepackage[T1]{fontenc}
\usepackage[english]{babel}

\usepackage{amsmath,amssymb}
\usepackage{siunitx}
\usepackage{graphicx}
\usepackage{subcaption}
\captionsetup{compatibility=false}
\usepackage{xspace}
\usepackage[ruled,linesnumbered]{algorithm2e}
\usepackage{pgfplots}
	\usepgfplotslibrary{colorbrewer}
\pgfplotscreateplotcyclelist{twolinestyles}{solid,dashed}
\pgfplotscreateplotcyclelist{pairedmarks}{%
every mark/.append style={solid,fill=.!80!black},mark=*,mark repeat=10,mark phase=1\\%
every mark/.append style={solid,fill=.!80!black},mark=o,mark repeat=10,mark phase=6\\%
every mark/.append style={solid,fill=.!80!black},mark=square*,mark repeat=10,mark phase=2\\%
every mark/.append style={solid,fill=.!80!black},mark=square,mark repeat=10,mark phase=7\\%
every mark/.append style={solid,fill=.!80!black},mark=triangle*,mark repeat=10,mark phase=3\\%
every mark/.append style={solid,fill=.!80!black},mark=triangle,mark repeat=10,mark phase=8\\%
every mark/.append style={solid,fill=.!80!black},mark=otimes*,mark repeat=10,mark phase=4\\%
every mark/.append style={solid,fill=.!80!black},mark=otimes,mark repeat=10,mark phase=9\\%
every mark/.append style={solid,fill=.!80!black},mark=diamond*,mark repeat=10,mark phase=5\\%
every mark/.append style={solid,fill=.!80!black},mark=diamond,mark repeat=10,mark phase=10\\%
every mark/.append style={solid,fill=.!80!black},mark=pentagon*,mark repeat=10,mark phase=3\\%
every mark/.append style={solid,fill=.!80!black},mark=pentagon,mark repeat=10,mark phase=8\\%
mark=star,mark repeat=10,mark phase=4\\%
mark=star,mark repeat=10,mark phase=9\\%
}
\pgfplotsset{
	cycle list/Paired-10,
	cycle multiindex* list={
    	pairedmarks\nextlist
    	twolinestyles\nextlist
    	Paired-10\nextlist
    },
    every axis plot/.append style={thick},
	xtick=data,
	height=0.4\textwidth,
	width=0.84\textwidth,
	xlabel=Number of operators,
	ylabel=Time (seconds),
	legend pos=outer north east,
	xtick={0,10000,20000,30000,40000,50000},
	scaled x ticks=false
}
\usepackage[hidelinks,breaklinks]{hyperref}


\title{Internal versus external balancing in the evaluation of graph-based number types}
\titlerunning{Internal vs external balancing in the evaluation of graph-based number types}
\author{Hanna Geppert \and Martin Wilhelm}
\authorrunning{H. Geppert \and M. Wilhelm}
\institute{Otto-von-Guericke Universität, Magdeburg, Germany}

\newcommand{\RealAlgebraic}{\texttt{Real\_algebraic}\xspace}
\newcommand{\mpt}{\texttt{mpfr\_t}\xspace}
\newcommand{\mpfr}{MPFR\xspace}
\newcommand{\boost}{Boost\xspace}
\newcommand{\binterval}{\texttt{boost::interval}\xspace}
\newcommand{\gpp}{g\texttt{++}\xspace}
\newcommand{\double}{\texttt{double}\xspace}

\newcommand{\fncompress}{\texttt{compress}\xspace}
\newcommand{\fnraise}{\texttt{raise}\xspace}
\newcommand{\fnsplit}{\texttt{split}\xspace}

\newcommand{\ieee}{IEEE~754\xspace}

\newcommand{\df}{\texttt{def}\xspace}
\newcommand{\bre}{\texttt{bru}\xspace}
\newcommand{\brd}{\texttt{brd}\xspace}
\newcommand{\vdh}{\texttt{ebc}\xspace}
\newcommand{\ebd}{\texttt{ebd}\xspace}
\newcommand{\wbvdh}{\texttt{cmb}\xspace}
\newcommand{\mt}{\texttt{m}\xspace}

\newcommand{\mdf}{\texttt{defm}\xspace}
\newcommand{\mbre}{\texttt{brum}\xspace}
\newcommand{\mbrd}{\texttt{brdm}\xspace}
\newcommand{\mvdh}{\texttt{ebcm}\xspace}
\newcommand{\mebd}{\texttt{ebdm}\xspace}
\newcommand{\mwbvdh}{\texttt{cmbm}\xspace}

\newcommand{\aand}{\textbf{\upshape and}\xspace}

\newcommand{\xapprox}{{\tilde{x}}}
\newcommand{\xval}{x}
\newcommand{\q}{q}
\newcommand{\il}{{i_l}}
\newcommand{\ir}{{i_r}}
\newcommand{\iv}{{i_v}}
\newcommand{\xinc}{\il}
\newcommand{\yinc}{\ir}
\newcommand{\pinc}{\iv}
\newcommand{\cx}{{c_l}}
\newcommand{\cy}{{c_r}}
\newcommand{\wx}{{w_l}}
\newcommand{\wy}{{w_r}}
\newcommand{\wall}{{w_{all}}}
\newcommand{\zx}{{z_l}}
\newcommand{\zy}{{z_r}}
\newcommand{\el}{{e_l}}
\newcommand{\er}{{e_r}}

\newcommand{\dl}{{d_l}}
\newcommand{\dr}{{d_r}}
\newcommand{\fd}{{d_f}}
\newcommand{\fc}{{c_f}}
\newcommand{\cc}{{c}}
\newcommand{\edges}{{\textbf{E}(E)}}

\newcommand{\edelta}{\delta}
\newcommand{\vdelta}{\delta}

\newcommand{\xhigh}{{x_{high}}}
\newcommand{\yhigh}{{y_{high}}}
\newcommand{\xlow}{{x_{low}}}
\newcommand{\ylow}{{y_{low}}}

\newcommand{\pv}{{p_v}}
\newcommand{\Ebal}{{E_{bal}}}
\newcommand{\Elist}{{E_{list}}}

\newcommand{\vroot}{{v_r}}
\newcommand{\vsplitold}{{v_s}}
\newcommand{\vsplitnew}{{v_s'}}

\DeclareMathOperator{\cost}{cost}
\DeclareMathOperator{\constcost}{cost_C}
\DeclareMathOperator{\cp}{cp}
\DeclareMathOperator{\weight}{weight}
\DeclareMathOperator{\wgt}{w}

\DeclareMathOperator{\icost}{cost_i}
\DeclareMathOperator{\vcost}{cost_v}
\DeclareMathOperator{\fcost}{cost_f}
\DeclareMathOperator{\inccost}{cost_r}
\DeclareMathOperator{\pcost}{c_{f}}
\DeclareMathOperator{\inc}{i}
\DeclareMathOperator{\cinc}{c}
\DeclareMathOperator{\paths}{\textbf{P}}
\DeclareMathOperator{\pth}{path}

\newcommand{\vinc}{{\inc(v)}}
\newcommand{\lcinc}{{\cinc(\el)}}
\newcommand{\rcinc}{{\cinc(\er)}}

%

\begin{document}

\maketitle

\begin{abstract}
Number types for exact computation are usually based on directed acyclic graphs. A poor graph structure can impair the efficency of their evaluation. In such cases the performance of a number type can be drastically improved by restructuring the graph or by internally balancing error bounds with respect to the graph's structure. We compare advantages and disadvantages of these two concepts both theoretically and experimentally.
\end{abstract}

\section{Introduction}
\label{sec:introduction}

Inexact computation causes many problems when algorithms are implemented, ranging from slightly wrong results to crashes or invalid program states. This is especially prevalent in the field of computational geometry, where real number computations and combinatorical properties intertwine~\cite{schirra2000}.
In consequence, various exact number types have been developed~\cite{mehlhorn89,moerig10,yu2010}. It is an ongoing challenge to make these number types sufficiently efficient to be an acceptable alternative to floating-point primitives in practical applications. 
Number types based on the Exact Computation Paradigm recompute the value of complex expressions if the currently stored error bound is not sufficient for an exact decision~\cite{yap1997}. Hence, they store the computation history of a value in a directed acyclic graph, which we call an \emph{expression dag}. The structure of the stored graph is then determined by the order in which the program executes the operations. It lies in the nature of iterative programming that values are often computed step by step, resulting in list-like graph structures. 

Re-evaluating expressions in an unbalanced graph is more expensive than in a balanced one~\cite{vanderHoeven2006,wilhelm17}. 
We discuss two general approaches on reducing the impact of graph structure on the evaluation time.
Prior to the evaluation, the expression dag can be restructured. Originally proposed by Yap~\cite{yap1997}, restructuring methods with varying degrees of invasiveness were developed~\cite{wilhelm17,wilhelm18}. Root-free expression trees can be restructured to reach optimal depth as shown by Brent~\cite{brent74}. In Section~\ref{ssc:restructuring} we introduce a weighted version of Brent's algorithm applied on maximal subtrees inside an expression dag.
Besides restructuring, which can be considered `external' with respect to the evaluation process, we can make `internal' adjustments during the evaluation to compensate for bad structure. Error bounds occuring during an evaluation can be balanced to better reflect the structure of the graph~\cite{vanderHoeven2006}. Doing so requires a switch from an integer to a floating-point error bound representation, leading to numerical issues that need to be taken into consideration~\cite{monniaux2008,wilhelm18error}. In Section~\ref{ssc:errorbalancing} we show how error bounds can be balanced optimally in both the serial and the parallel case and compare several heuristics. Finally, in Section~\ref{sec:experiments} we experimentally highlight strengths and weaknesses of each approach. 

\section{Concepts}
\label{sec:theory}
An \emph{expression dag} is a rooted ordered directed acyclic graph in which each node is either a floating-point number, a unary operation ($\sqrt[d]{\phantom{x}}$,$-$) with one child, or a binary operation ($+$,$-$,$*$,$/$) with two children. We call an expression dag $E'$, whose root is part of another expression dag $E$ a \emph{subexpression of $E$}. We write $v\in E$ to indicate that $v$ is a node in $E$ and we write $|E|$ to represent the number of operator nodes in $E$.
In an \emph{accuracy-driven evaluation} the goal is to evaluate the root node of an expression dag with absolute accuracy $\q$, i.e., to compute an approximation $\xapprox$ for the value $\xval$ of the represented expression, such that $|\xapprox-\xval| \leq 2^{\q}$ (cf.~\cite{yap1997}). To reach this goal, sufficiently small error bounds for the (up to two) child nodes and for the operation error are set and matching approximations are computed recursively for the children. 
Let $v\in E$ be a node with outgoing edges $\el$ to the left and $\er$ to the right child. Let $\inc(\el),\inc(\er)$ be the increase in accuracy for the left and the right child of~$v$ and $\inc(v)$ be the increase in accuracy for the operation (i.e.\ the increase in precision) at $v$. Depending on the operation in $v$ we assign constants $\lcinc,\rcinc$ to its outgoing edges as depicted in Table~\ref{tbl:operationconstants}. If the node $v$ is known from the context, we shortly write $\iv,\il,\ir$ for the accuracy increases at $v,\el,\er$ and $\cx,\cy$ for the respective constants. To guarantee an accuracy of $q$ at $v$, the choice of $\iv,\il,\ir$ must satisfy the inequality
\begin{align}
2^{\q+\iv}+\cx 2^{\q+\il}+\cy 2^{\q+\ir} \leq 2^{\q} \quad\textrm{ or, equivalently, }\quad 2^\iv+\cx 2^\il+\cy 2^\ir\leq 1 
\label{eq:accuracycondition}
\end{align}
Aside from this condition, the choice of $\iv,\il,\ir$ is arbitrary and usually done by 
a symmetric distribution of the error. In the exact number type \RealAlgebraic they are chosen such that $\cx 2^\il\leq 0.25$, $\cy 2^\ir\leq0.25$ and $2^\iv = 0.5$ (and adjusted accordingly for one or zero children). Let the \emph{depth} of a node $v$ in an expression dag be the length of the longest path from the root to $v$. In general, the precision~$\pv$ needed to evaluate a node $v$ increases linearly with the depth of the node due to the steady increase through $\il,\ir$. The approximated value of each node is stored in a multiple-precision floating-point type (bigfloat). The cost of evaluating a node is largely dominated by the cost of the bigfloat operation, which is linear in $|\pv|$ in case of addition and subtraction and linear up to a logarithmic factor in case of multiplication, division and roots. So the precision~$\pv$ is a good indicator for the total evaluation cost of a node (except for negations).
\begin{table}[bth]
\caption{Operation-dependent constants $\cinc(\el)$ and $\cinc(\er)$ for an accuracy-driven evaluation in \RealAlgebraic, with $\xhigh,\yhigh$ upper bounds and $\xlow,\ylow$ lower bounds on the child values.}
\label{tbl:operationconstants}
\centering
\newcommand{\fcolh}[1]{\makebox[0.5cm][c]{#1}}
\newcommand{\colh}[1]{\raisebox{1ex}{\makebox[2cm][c]{#1}}}
\renewcommand{\arraystretch}{1.4}
\vspace{0.2cm}
\begin{tabular}{c|ccccc}
\fcolh{} & \colh{negation} & \colh{add./sub.} & \colh{multipl.} & \colh{division} & \colh{$d$-th root} \\[-1ex]
\hline 
$\cinc(\el)$ & $1$ & $1$ & $\yhigh$ & $\frac{1}{\ylow}$ & $\frac{1}{d}(\xlow)^{\frac{1-d}{d}}$ \\ 
$\cinc(\er)$ & $0$ & $1$ & $\xhigh$ & $\frac{1}{\ylow^2}$ & $0$ \\ 
\end{tabular} 
\end{table}
Let $E$ be an expression dag. We define the \emph{cost of a node $v\in E$} to be $|\pv|$ and the \emph{cost of $E$}, denoted by $\cost(E)$, as the sum of the cost of all nodes in $E$. We set the \emph{depth of $E$} to the maximum depth of all nodes in $E$. Let $\Elist$ be a list-like expression dag, i.e., an expression dag with depth $\Theta(n)$, where $n$ is the number of its nodes and let $\Ebal$ be a balanced expression dag, i.e., an expression dag with depth $\Theta(\log(n))$. Since the precision increases linearly with the depth, we have $\cost(\Elist)=\Theta(n^2)$ and $\cost(\Ebal)=\Theta(n\log(n))$, assuming that the operation constants can be bounded (cf.~\cite{wilhelm17}).
In a parallel environment the cost of the evaluation is driven 
 by dependencies between the nodes. For an expression dag $E$ with $n$ nodes let the \emph{cost of a path} in $E$ be the sum of the cost of the nodes along the path. Let $\cp(E)$ be a path in $E$ with the highest cost. We call $\cp(E)$ a \emph{critical path} in $E$. Then the cost of evaluating $E$ in parallel is $\Theta(\cost(\cp(E)))$ with $O(n)$ processors. Let $\Elist,\Ebal$ be defined as before. Then obviously $\cost(\cp(\Elist))=\Theta(n)$ and $\cost(\cp(\Ebal))=\Theta(\log n)$ (cf.~\cite{wilhelm18}). So in both the serial and the parallel case, balanced graph structures are superior. 

\subsection{Graph Restructuring}
\label{ssc:restructuring}

By definition, exact number types that use accuracy-driven evaluation act lazy, i.e., expressions are not evaluated until a decision needs to be made. Before their first evaluation, underlying graph structures are lightweight and can be changed at low cost. Therefore graph restructuring algorithms ideally take place when the first decision is demanded. While it is not impossible to restructure graphs that have already been evaluated, it comes with several downsides. Since subexpressions will change during restructuring, all approximations and error bounds associated with these subexpressions are lost, although they could be reused in later evaluations. Since stored data may depend on data in subexpressions, the internal state of the whole expression dag may be invalidated. Those effects can make restructuring expensive if many decisions are requested without significant changes to the graph in between. 
Let $E$ be an expression dag. We call a connected, rooted subgraph of $E$ an \emph{operator tree} if it consists solely of operator nodes, does not contain root operations and does not contain nodes with two or more parents (not necessarily in $E$), except for its root. We restructure each maximal operator tree in $E$ according to a weighted version of Brent's algorithm.
Let $T$ be an operator tree in $E$. We call the children of the leaves of $T$ the \emph{operands} of $T$ and associate a positive weight with each of those operands. We define a weight function, such that for each node $v\in T$ the weight of $v$ is greater or equal than the weight of its children. The main difference between the original algorithm and the weighted variation lies in the choice of the split node. We give a brief outline of the algorithm.

The algorithm builds upon two operations, \fncompress and \fnraise. The operation \fncompress takes an expression tree $E$ and returns an expression tree of the form $F/G$ and \fnraise takes an expression tree $E$ and a subtree $X$ and returns an expression tree of the form $(AX+B)/(CX+D)$, where $A,B,C,D,F,G$ are division-free expression trees with logarithmic depth.

\begin{algorithm}[!ht]
\SetKwProg{Fn}{Function}{:}{}
\SetKwFunction{Compress}{compress}
\SetKwFunction{Raise}{raise}
\SetKwFunction{Split}{split}
 \Fn{\Compress{$R$}}{
	\If{$R$ is not an operand}{
 	  $X$ = \Split{$R$,$\frac{1}{2} \weight(R)$};\\
	  let $X_1, X_2$ be the children of $X$;\\
	  \Compress{$X_1$}; \Compress{$X_2$};	\Raise{$R$,$X$};\\
 	  substitute $X$ in $R$;
 	}
 }
 \DontPrintSemicolon\;
 \Fn{\Raise{$R$,$X$}}{
 	\If{$R\neq X$}{
	  $Y$ = \Split{$R$,$\frac{1}{2} (\weight(R) + \weight(X))$};\\
	  let $Y_1, Y_2$ be the children of $Y$, such that $Y_1$ contains $X$;\\
	  \Raise{$Y_1$,$X$}; \Compress{$Y_2$}; \Raise{$R$,$Y$};\\
	  substitute $Y$ in $R$;
	}
 }
\caption{The operations \fncompress and \fnraise.} 
\label{alg:compressraise}
\end{algorithm}

Let $\vroot$ be the root node of $T$. We choose $\vsplitold$ as a node with maximal weight in $T$ such that both children have either weight $<\frac{1}{2}\weight(\vroot)$ or are operands. Note that this implies $\weight(\vsplitold)\geq \frac{1}{2} \weight(\vroot)$. We then recursively call \fncompress on $\vsplitold$ and raise $\vsplitold$ to the root by repeating the following steps:
\begin{enumerate}
\item Search for a new split node $\vsplitnew$ on the path from $\vroot$ to $\vsplitold$ that splits at a weight of $\frac{1}{2}(\weight(\vroot)+\weight(\vsplitold))$.
\item Recursively raise $\vsplitnew$ to $\vroot$ and $\vsplitold$ to the respective child node in $\vsplitnew$.
\item Substitute $\vsplitnew$ and its children into $\vroot$ by incorporating the operation at $\vsplitnew$.
\end{enumerate}
Let $R$ be the expression at $\vroot$, let $Y$ be the expression at $\vsplitnew$ and let $X$ be the expression at $\vsplitold$. After the second step, $R=\frac{A'Y+B'}{C'Y+D'}$ and we have $Y=Y_L\circ Y_R$ with $Y_L=\frac{A''X+B''}{C''X+D''}$ and $Y_R=F''/G''$ or vice versa. Substituting $Y$ (with respect to the operation $\circ$ at $Y$) then gives the desired $R=\frac{AX+B}{CX+D}$. Substituting $X=F'/G'$ finally leads to a balanced expression of the form $R=F/G$.
\begin{algorithm}[!ht]
\SetKwProg{Fn}{Function}{:}{}
\SetKwFunction{Split}{split}
 \Fn{\Split{$X$,$w$}}{
 	\uIf{$X.left$ is not operand \aand $\weight(X.left) \geq w$ \aand $\weight(X.left) \geq \weight(X.right)$ }{
 		\Return \Split{$X$.left,$w$};
 	}
 	\uElseIf{$X.right$ is not operand \aand $\weight(X.right) \geq w$}{
 		\Return \Split{$X$.right,$w$};
 	}
 	\uElse{
 		\Return $X$;
 	}
 }
\caption{The \fnsplit operation.} 
\label{alg:split}
\end{algorithm}

The new \fnsplit operation is shown in Algorithm~\ref{alg:split}. If unit weight is chosen, there will never be an operand that does not satisfy the split condition. If furthermore the weight function is chosen as the number of operands in a subtree, satisfying the split condition implies having a bigger weight than the sibling. Therefore the algorithm is identical to Brent's original algorithm applied to subtrees of the expression dag and guarantees logarithmic depth for the new operator tree. 
Regarding the overall expression dag, nodes which contain root operations, have more than one parent or have been evaluated before are treated equally to the other operands in this case and therefore act as `blocking nodes' for the balancing process. 
Let $k$ be the number of these blocking nodes in $E$. If the number of incoming edges for each blocking node is bounded by a constant, the depth of $E$ after applying the algorithm to each operator tree is in $O(k\log(\frac{n}{k}))$. This depth can be reduced by applying appropriate weights to the blocking nodes.
From a conceptual perspective, a sensible choice for the weight of an operand (as well as for the weights of the inner nodes) would be the number of operator nodes in the subexpression rooted at the operand. Note that we are actually interested in the number of bigfloat operations. However, it is very expensive to compute the number of descendants for a node in a DAG, since one has to deal with duplicates~\cite{borassi2016}. Ignoring duplicates, we could choose the number of operators we would get by expanding the DAG to a tree. While computable in linear time, the number of operators can get exponential (cf.~\cite{vanderHoeven2006,wilhelm17}) and therefore we cannot store the exact weight in an integer data type anymore. There are ways of managing such weights, as we discuss in Section~\ref{ssc:errorbalancing}, but they are imprecise and less efficient than relying on primitives.
Both weight functions behave identical to the unit weight case when there are no blocking nodes present. If there are blocking nodes on the other hand, these nodes get weighted accordingly and expensive nodes are risen to the top of the operator tree. The depth after restructuring for $k$ blocking nodes therefore becomes $O(k+\log n)$.

The weight functions described above are optimal, but hard to compute. Let the weight of both operators and operands be the depth of the subexpression rooted at the operand or operator in the underlying expression dag. Then the algorithm subsequently reduces the length of the longest paths in the expression dag. Note that this strategy does not necessarily lead to an optimal result. Nevertheless, computing the depth of a subexpression in an expression dag can be done fast and the depth can be represented efficiently. Therefore this strategy might prove to be a good heuristic to combine advantages of the unit weight algorithm and the weighted approach.

\subsection{Error Bound Balancing}
\label{ssc:errorbalancing}

As described at the start of this section, the additional cost of unbalanced graph structures originates in the increase in accuracy associated with each node. A more careful choice of $\pinc,\xinc,\yinc$ in (\ref{eq:accuracycondition}) may compensate for an unfavorable structure. If set correctly, linear depth still only leads to a logarithmic increase in accuracy aside from operation constants~\cite{vanderHoeven2006}.

An increase in accuracy at an operator node only affects the operation itself. An increase in accuracy for a child node affects all operations in the subexpression of the child. We associate a non-negative \emph{weight} $\wgt(e)$ with each edge $e$ in an expression dag $E$, representing the impact a change in $\inc(e)$ has on the total cost of $E$. For a node $v\in E$ with outgoing edges $\el,\er$ let $\wx=\wgt(\el)$ and $\wy=\wgt(\er)$.
We then say that, for an evaluation to accuracy $q$, the cost induced on $E$ by the choice of parameters in $v$ is given by
\begin{align}
\label{eq:costerrorboundbalancing}
\icost(v) = -(q + \pinc + \wx\xinc + \wy\yinc)
\end{align}
whereas $\cost(E) = \sum_{v\in E}\icost(v)$. To minimize the total cost we want to minimize the cost induced by each node while maintaining the condition in (\ref{eq:accuracycondition}).
Let $\zx=\cx2^\xinc$, $\zy=\cy2^\yinc$ and let $\wall = 1+\wx+\wy$. With an optimal choice of the parameters, (\ref{eq:accuracycondition}) is an equality and we have $\pinc=\log(1-\zx-\zy)$. Substituting $\pinc$ into (\ref{eq:costerrorboundbalancing}) and setting $\frac{\partial}{\partial \xinc}\icost(v) = \frac{\partial}{\partial \yinc}\icost(v) = 0$ we get
\begin{align}
(1+\wx)\zx+\wx\zy-\wx&=0 \label{eq:partialone}\\
(1+\wy)\zy+\wy\zx-\wy&=0 \label{eq:partialtwo}
\end{align}
leading to $\zx = \frac{\wx}{\wall}$ and $\zy = \frac{\wy}{\wall}$. Resubstituting $\zx$ and $\zy$, the optimal choice of the parameters for error bound distribution inside a node is
\begin{align}
\label{eq:parameterchoice}
\xinc &= \log(\wx) - \log(\wall) - \log(\cx) \nonumber\\ 
\yinc &= \log(\wy) - \log(\wall) - \log(\cy) \\
\pinc &= -\log(\wall)\nonumber
\end{align}

We can show that this parameter choice makes the cost of the evaluation to some degree independent of the structure of the graph. For a node $v\in E$ we denote the set of paths between the root node of $E$ and $v$ by $\paths(v)$. For a path $P\in\paths(v)$ we write $e\in P$ to indicate that $e$ is an edge along $P$.
The precision requested at $v$ along $P$ can be expressed as \(\inccost(P) = \vcost(P)+\fcost(P)\) where 
\[
\textstyle\vcost(P)=-\sum_{e\in P}(\inc(e)+\log(\cinc(e)))-\vinc \quad\textrm{ and }\quad \fcost(P)=\sum_{e\in P}\log(\cinc(e))
\] 
denote the \emph{variable cost} induced by the choice of $\xinc,\yinc,\pinc$ and the \emph{fixed cost} induced by the operation constants along the path.
\begin{theorem}
\label{thm:errorboundcost}
Let $E$ be an expression dag consisting of $n$ unevaluated operator nodes. Then the cost of evaluating $E$ with accuracy $\q\leq 0$ and with an optimal choice of parameters is \[\textstyle\cost(E) = n\log(n) + \sum_{v\in E}\log\left(\sum_{P\in\paths(v)}2^{\fcost(P)}\right) - n\q\]
\end{theorem}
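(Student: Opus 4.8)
The plan is to determine, under the optimal parameter choice (\ref{eq:parameterchoice}), the precision that has to be spent at each of the $n$ nodes and then to add these $n$ contributions. I would first abbreviate $C_v:=\sum_{P\in\paths(v)}2^{\fcost(P)}=\sum_{P\in\paths(v)}\prod_{e\in P}\cinc(e)$, the factor by which a local error committed at $v$ is amplified on its way up to the root: along a single path $P$ the amplification is $\prod_{e\in P}\cinc(e)=2^{\fcost(P)}$, and since in a dag the error of $v$ reaches the root through every path in $\paths(v)$, these amplifications add. It is useful to record the recursion $C_{\vroot}=1$ and $C_v=\sum_{e=(u,v)}\cinc(e)\,C_u$, which both certifies that $C_v$ is the path sum and supports an induction if an inductive writeup is preferred.

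Starting from $\cost(E)=\sum_{v\in E}\icost(v)$, I would recast the task as a budget-allocation problem. If node $v$ is computed with precision $p_v$, it admits a local operation error of size $2^{-p_v}$, so the total error reaching the root is of order $\sum_{v\in E}2^{-p_v}C_v$, and the requirement $|\xapprox-\xval|\le 2^{\q}$ becomes $\sum_{v\in E}2^{-p_v}C_v\le 2^{\q}$. Since the cost of a node is $|\pv|$ and every node is unevaluated, minimising $\cost(E)=\sum_{v\in E}p_v$ subject to this single constraint is a convex program whose optimum equalises the $n$ summands, i.e.\ $2^{-p_v}C_v=2^{\q}/n$ for every $v$. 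This global balance is the cumulative effect of the local stationarity conditions (\ref{eq:partialone}),(\ref{eq:partialtwo}) imposed at every node, the key identity being that the optimal increases satisfy $\inc(e)+\log\cinc(e)=\log\wgt(e)-\log\wall$, so that the variable part $\vcost(P)$ becomes independent of the path. Solving $2^{-p_v}C_v=2^{\q}/n$ gives $p_v=-\q+\log n+\log C_v$.

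Summing over the $n$ nodes then yields \[\cost(E)=\sum_{v\in E}p_v=n\log n+\sum_{v\in E}\log C_v-n\q,\] which is the assertion once $\log C_v$ is expanded back into $\log\sum_{P\in\paths(v)}2^{\fcost(P)}$. The hypothesis $\q\le 0$ enters precisely here: it guarantees $p_v=-\q+\log n+\log C_v\ge 0$ for each node, so that $\cost$, defined as the sum of the absolute precisions $|\pv|$, coincides with $\sum_{v\in E}p_v$ and no absolute values survive.

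I expect the main obstacle to be the dag sharing, i.e.\ justifying that the correct aggregation over $\paths(v)$ is the additive $\sum_{P}2^{\fcost(P)}$ rather than a maximum over paths. A node with several parents is stored and evaluated only once, yet its error propagates to the root along each path simultaneously, and it is this superposition that produces the factor $C_v$ and, after equalising the $n$ budget shares, the exact term $n\log n$ with $n$ the number of distinct dag nodes and not the size of the tree expansion. The delicate point is reconciling this global count with the local weights $\wx,\wy$, which are defined node-locally and behave like subtree sizes; I would therefore carry the global count through the recursion for $C_v$ and use the telescoping of $\inc(e)+\log\cinc(e)$ only to certify that the per-node parameters (\ref{eq:parameterchoice}) indeed realise the equalising optimum.
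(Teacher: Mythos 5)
Your proposal is correct in substance, but it reaches the theorem by a genuinely different route than the paper. The paper never leaves the space of realizable per-node parameters: it constructs the explicit edge weights $\wgt(e)=\frac{\pcost(v,e)}{\pcost(v)}\wgt(v)$ from the path sums $\pcost(v)=\sum_{P\in\paths(v)}2^{\fcost(P)}$, shows by a telescoping computation (their display~(\ref{eq:pathincrease})) that the choice~(\ref{eq:parameterchoice}) then makes the requested precision path-independent, and proves optimality by a local perturbation argument: a deviation $\edelta(e)$ in the edge increases changes $\inc(v)$ by at most $-\edelta(\el)\wgt(\el)-\edelta(\er)\wgt(\er)$ (a slope bound on~(\ref{eq:accuracycondition}) kept tight), and a min-versus-weighted-average inequality yields $\Delta\cost(E)\geq 0$. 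You instead relax everything to a single global convex program, minimize $\sum_v p_v$ subject to the aggregated budget $\sum_v 2^{-p_v}C_v\leq 2^{\q}$ with $C_v=\pcost(v)$, solve it by Lagrangian equalization $2^{-p_v}C_v=2^{\q}/n$, and invoke the local scheme only to certify achievability. Your route buys optimality for free from convexity, makes the dag-sharing mechanism (one evaluation, additive amplification over $\paths(v)$) explicit via the recursion $C_v=\sum_{e=(u,v)}\cinc(e)C_u$, and in fact confirms the sign in the theorem: the paper's own final equality in~(\ref{eq:pathincrease}) carries a slip ($-\log(\pcost(v))$ where the telescoping gives $+\log(\pcost(v))$), and your version is the one consistent with the statement. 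The paper's route, in exchange, needs no separate feasibility step and directly produces the weight function~(\ref{eq:optimalweight}) that the heuristics of Section~\ref{ssc:errorbalancing} approximate. Two points you should make explicit in a full write-up: (i) the lower-bound direction of your relaxation requires the one-line unfolding showing that any per-node scheme satisfying~(\ref{eq:accuracycondition}) at every node induces precisions feasible for the global budget, and the upper-bound direction requires the paper's telescoping computation verbatim, since the realized precision at $v$ is the maximum over $\paths(v)$ and must be shown to equal the relaxed optimum; (ii) your claim that $\q\leq 0$ forces $p_v\geq 0$ is not literally true when operation constants are small (roots or division by large values can make $\log C_v<0$), though the paper makes the same tacit assumption by defining $\icost(v)$ in~(\ref{eq:costerrorboundbalancing}) without absolute values, so this does not put you behind the original.
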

\begin{proof}
We define weights for each node $v$ and each edge $e$ in $E$ with respect to~(\ref{eq:costerrorboundbalancing}). Let
\(\textstyle\pcost(v) = \sum_{P\in\paths(v)}2^{\fcost(P)}\textrm{ and }\pcost(v,e)=\sum_{P_e\in\paths(v),e\in P_e}2^{\fcost(P_e)}\).
Then we set $\wgt(v) = \wall = 1+\wx+\wy$ if $v$ is an operator node and $\wgt(v)=0$ otherwise. For an edge $e$ leading to $v$ we set 
\begin{align}
\label{eq:optimalweight}
\wgt(e) = \frac{\pcost(v,e)}{\pcost(v)}\wgt(v) = \frac{\sum_{P_e\in\paths(v),e\in P_e}2^{\fcost(P_e)}}{\sum_{P\in\paths(v)}2^{\fcost(P)}}\wgt(v)
\end{align}

We show that choosing the parameters as in~(\ref{eq:parameterchoice}) with this weight function is optimal and that it leads to the desired total evaluation cost.
For a node $v\in E$ let $P\in\paths(v)$ be any path to $v$ of the form $P=(v_0,e_0,...,v_k,e_k,v_{k+1}=v)$, then
\begin{align}
&\inccost(P) = \vcost(P) + \fcost(P) \nonumber\\
		 &\textstyle\quad= -\sum_{e\in P}(\inc(e)+\log(\cinc(e)))-\vinc + \fcost(P) \nonumber\\
		 &\textstyle\quad= -\sum_{j=0}^k(\log(\wgt(e_j))-\log(\wgt(v_j))+\log(\wgt(v)) + \fcost(P) \nonumber\\
		 &\textstyle\quad= \log(\wgt(v_0)) - \sum_{j=0}^k\left(\log(\pcost(v_j))+\log(\cinc(e_j))-\log(\pcost(v_{j+1})\right) + \fcost(P) \nonumber\\
		 &\textstyle\quad= \log(\wgt(v_0)) - \log(\pcost(v))
\label{eq:pathincrease}
\end{align}
In particular, the precision requested at $v$ along each path is the same.
Assume that the parameter choice is not optimal. For an edge $e$ let $\edelta(e)$ be the difference in $\inc(e)$ between the optimal value and the value resulting from~(\ref{eq:parameterchoice}) with weights as defined in~(\ref{eq:optimalweight}) and let $\vdelta(v)$ be the respective difference in $\inc(v)$ for a node $v$. Due to the optimization that led to~(\ref{eq:parameterchoice}), the slope of $\inc(v)$ is $-\wgt(\el)$ in direction of $\inc(\el)$ and $-\wgt(\er)$ in direction of $\inc(\er)$ when keeping (\ref{eq:accuracycondition}) equal. So the difference in $\inc(v)$ can be bounded through
\[
\textstyle\vdelta(v) \leq -\edelta(\el)\wgt(\el)-\edelta(\er)\wgt(\er) = -\sum_{v'\in E}\left(\edelta(\el)\frac{\pcost(v',\el)}{\pcost(v')}+\edelta(\er)\frac{\pcost(v',\er)}{\pcost(v')}\right)
\]
Denote the difference in cost by preceeding it with $\Delta$ and let $\edges$ be the set of edges in $E$. For our parameter choice, the precision requested at a node $v$ is the same along each path as shown in~(\ref{eq:pathincrease}), so $\Delta\max_{P\in\paths(v)}\inccost(P)=\max_{P\in\paths(v)}\Delta\inccost(P)$. We then get
\begin{align*}
\Delta\cost(E) &= \sum_{v\in E}\max_{P\in\paths(v)}\Delta\inccost(P) \\
			   &= -\sum_{v\in E}\min_{P\in\paths(v)}\sum_{e\in P}\edelta(e)-\sum_{v\in E}\vdelta(v)\\
			   &\geq -\sum_{v\in E}\min_{P\in\paths(v)}\sum_{e\in P}\edelta(e)+\sum_{v\in E}\sum_{e\in \edges}\edelta(e)\frac{\pcost(v,e)}{\pcost(v)}\\
			   &= -\sum_{v\in E}\min_{P\in\paths(v)}\sum_{e\in P}\edelta(e)+\sum_{v\in E}\sum_{P\in\paths(v)}\sum_{e\in P}\edelta(e)\frac{2^{\fcost(P)}}{\pcost(v)}\\
			   &\geq -\sum_{v\in E}\min_{P\in\paths(v)}\sum_{e\in P}\edelta(e)+\sum_{v\in E}\min_{P\in\paths(v)}\sum_{e\in P}\edelta(e) \;=\; 0
\end{align*}
and therefore our parameter choice is optimal. It remains to calculate the total cost for evaluating $E$. Since $\wgt(v_0)=n$ and each path $P\in\paths(v)$ leads to the same requested precision, the desired equation follows directly from (\ref{eq:pathincrease}) with
\begin{align*}
\cost(E) = \sum_{v\in E}\max_{P\in\paths(v)}\inccost(P) - nq = \sum_{v\in E}(\log(n)-\log(\pcost(v)))-nq \tag*{\qed}
\end{align*}
\end{proof}

Choosing the parameters as in (\ref{eq:parameterchoice}) leads to an optimal distribution of error bounds under the assumption that the weights $\wx,\wy$ accurately reflect the impact of an increase in $\xinc,\yinc$ on the total cost. Computing the exact weight shown in (\ref{eq:optimalweight}) is hard since we have to know and to maintain the cost along all paths leading to a node. We discuss several heuristic approaches. From Theorem~\ref{thm:errorboundcost} we can immediately conclude:
\begin{corollary}
\label{cor:treecost}
Let $T$, $|T|=n$, be an expression tree, i.e., an expression dag where each node has at most one parent. Then the optimal weight choice for an edge leading to a node $v$ is the number of operator nodes in the subexpression rooted at $v$ and the cost of an evaluation of $T$ to accuracy $\q\leq0$ is
\[
	\cost(T) = n\log n + \sum_{v\in V}\fcost(\pth(v)) - nq
\]
where $\pth(v)$ denotes the unique path $P\in\paths(v)$.
\qed
\end{corollary}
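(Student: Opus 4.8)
The plan is to obtain the corollary as a direct specialization of Theorem~\ref{thm:errorboundcost}, exploiting the fact that in a tree every node is reached by a unique path, so that all sums over $\paths(v)$ collapse to a single term. First I would record that for each operator node $v$ there is exactly one path $\pth(v)\in\paths(v)$, whence $\pcost(v)=\sum_{P\in\paths(v)}2^{\fcost(P)}=2^{\fcost(\pth(v))}$ and therefore $\log(\pcost(v))=\fcost(\pth(v))$. Substituting this into the cost expression of the theorem, $\cost(E)=n\log n+\sum_{v\in E}\log(\pcost(v))-nq$, yields the claimed formula $\cost(T)=n\log n+\sum_{v\in V}\fcost(\pth(v))-nq$ with essentially no further work.

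For the weight claim I would argue in two steps. The edge $e$ leading to $v$ is the final edge of the unique path $\pth(v)$, so it lies on every path in $\paths(v)$; hence $\pcost(v,e)=\pcost(v)$, and by the edge-weight definition~(\ref{eq:optimalweight}) we get $\wgt(e)=\wgt(v)$. Thus in a tree the optimal weight of the edge into $v$ coincides with the node weight $\wgt(v)=\wall=1+\wx+\wy$. Writing $v_l,v_r$ for the children of $v$ and using $\wx=\wgt(\el)$, $\wy=\wgt(\er)$ together with the identity ``edge weight $=$ node weight'' applied to the two child edges, the node-weight recurrence becomes $\wgt(v)=1+\wgt(v_l)+\wgt(v_r)$, with the obvious reduction $\wgt(v)=1+\wgt(v_l)$ for unary operators.

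I would then close by induction over the subtree structure: operands are not operator nodes and carry weight $0$, which supplies the base case, while the recurrence $\wgt(v)=1+\wgt(v_l)+\wgt(v_r)$ is exactly the recurrence that counts the operator nodes in the subexpression rooted at $v$. Hence $\wgt(v)$, and with it the optimal edge weight $\wgt(e)$, equals the number of operator nodes in the subexpression rooted at $v$, as asserted.

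Because the statement is a specialization, there is no genuinely hard step here; optimality itself is inherited verbatim from Theorem~\ref{thm:errorboundcost} and need not be reproven. The only points demanding care are verifying that every path to $v$ must traverse its incoming edge $e$ (so that $\pcost(v,e)=\pcost(v)$ and the edge and node weights agree) and checking that the weight recurrence with weight $0$ on operands reproduces precisely the subtree operator count; everything else is a matter of reading off the collapsed sums.
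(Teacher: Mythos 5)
Your proof is correct and takes essentially the same route as the paper, which states the corollary as an immediate specialization of Theorem~\ref{thm:errorboundcost} (with no written proof body): in a tree $\paths(v)$ is a singleton, so $\log\bigl(\sum_{P\in\paths(v)}2^{\fcost(P)}\bigr)=\fcost(\pth(v))$ gives the cost formula, and $\pcost(v,e)=\pcost(v)$ gives $\wgt(e)=\wgt(v)$, whence the recurrence $\wgt(v)=1+\wgt(v_l)+\wgt(v_r)$ with weight $0$ on operands yields the subtree operator count by induction. Your write-up simply makes explicit the routine verification the paper leaves to the reader; there are no gaps.
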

So a natural choice for the weight of an edge is the number of operator nodes in the respective subexpression of the target node. Then the optimality condition holds for tree-like expression dags but fails when common subexpressions exist. Figure~\ref{fig:graphweights} shows a graph for which the optimal distribution~(\ref{sfig:optimaldist}) differs from the distribution achieved through counting the operators~(\ref{sfig:operatorcount}). In the example the weights for the middle node are $\wx=\wy=1$. Since the lower addition is a common child of the left and the right path, it gets evaluated only once. The optimal weights would therefore be $\wx=\wy=0.5$. When constants are present it may even occur that a common subexpression already needs to be evaluated at a much higher accuracy and therefore the weight can be set close to zero.
\begin{figure}[hbt]
\captionsetup[subfigure]{justification=centering}
\centering
\begin{subfigure}{0.33\linewidth}
\centering
\includegraphics[scale=0.8,page=1]{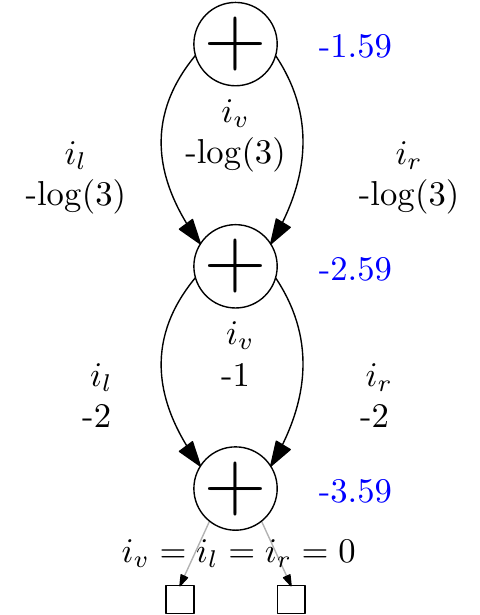}
\subcaption{Optimal distribution}
\label{sfig:optimaldist}
\end{subfigure}%
\begin{subfigure}{0.33\linewidth}
\centering
\includegraphics[scale=0.8,page=2]{balerrormulti}
\subcaption{Single Count}
\label{sfig:operatorcount}
\end{subfigure}
\begin{subfigure}{0.33\linewidth}
\centering
\includegraphics[scale=0.8,page=3]{balerrormulti}
\subcaption{Full Count}
\label{sfig:totaloperators}
\end{subfigure}
\caption{Error bound distribution through different weight functions. The optimal distribution achieves a total cost of $7.77$, while counting the operators with and without removing duplicates has total cost $8.17$ and $8.43$, respectively.}
\label{fig:graphweights}
\end{figure}

Computing the actual number of operators without duplicates in an expression dag is already a difficult task. As in Section~\ref{ssc:restructuring}, we can set the weight of an edge to the number of operators in the subexpression, counting duplicates, in which case we need to deal with a possible exponential increase in weight size. 
This leads to an additional loss in optimality (cf.~Figure~\ref{sfig:totaloperators}), but makes it algorithmically feasible to compute the weights. This approach is largely identical to the one of van der Hoeven, who defined the weights as the number of leaves in the left and right subexpression~\cite{vanderHoeven2006}. Regarding the exponential weight increase, van der Hoeven suggested the use of a floating-point representation. Effectively managing correct floating-point bounds can get expensive. We use a different approach. In the definition of $\pinc,\xinc,\yinc$ the actual value of the weights is never needed. This enables us to store the weight in a logarithmic representation from the start. The downside of this approach is that an exact computation of the weight is not possible even for small values. Note that an overestimation of the weights will never lead us to violate the condition in (\ref{eq:accuracycondition}) and therefore maintains exact computation.
When computing the weights, we need to compute terms of the form $\log(2^a+2^b)$. Let $a\geq b$, then we have $\log(2^a+2^b) = a+\log(1+2^{b-a})$ with $2^{b-a}\leq 1$. An upper bound on the logarithm can be obtained through repeated squaring~\cite{majithia1973}. For $a\gg b$ squaring $1+2^{b-a}$ is numerically unstable. In this case we can approximate the logarithm by linearization near $1$. Then
\begin{align}
\textstyle\log(1+r) \leq \log(1) + r \frac{d}{dx}\log(x)|_1 = \frac{r}{\ln(2)}
\end{align}
and therefore $\log(2^a+2^b) \leq a+\frac{1}{\ln(2)}2^{b-a}$. This approximation works well for a large difference between $a$ and $b$. For small values of $a-b$ we can use repeated squaring. Otherwise we simply set the result to $1$ for $a-b \leq \log(\ln(2))$. One way to efficiently compute an upper bound to the power term is to compute the product $2^{2^{d_1}}\cdots2^{2^{d_k}}$ with $d_{min}\leq d_i\leq 0$ for $1\leq i\leq k$ where $d_1,...,d_k\in\mathbb{Z}$ are the digits set to one in the binary representation of $b-a$. Since the number of possible factors is finite, we can store upper bounds for them in a lookup table.


Error bound balancing does not alter the structure of the expression dag and therefore does not change its parallelizability. The maximum cost of a critical path is reduced from $\Theta(n^2)$ to $\Theta(n\log n)$, but multiple threads cannot be utilized effectively. If an arbitrary number of processors is available, the total cost of the evaluation reduces to the cost of evaluating a critical path. We can therefore choose the error bounds in such a way that the highest cost of a path from the root to a leaf is minimized.
A lower bound on the cost of a critical path $P=(v_0,e_0,...,e_{k-1},v_{k})$ with $k$ operators can be obtained by isolating it, i.e., by assuming that each other edge in the expression dag leads to an operand. Let $\constcost(P)=\sum_{i=0}^k\fcost(\pth(v_i))-kq$ be the cost induced by the constants and the initial accuracy along $P$. Then Corollary~\ref{cor:treecost} gives
\[
\cost(P) = k\log k + \constcost(P)
\]
If $k=n$ the weight choice is already optimal. Let $\Ebal$ be an expression dag that resembles a perfectly balanced tree with depth $k$ and $2^k-1$ operator nodes. Since we do not have common subexpressions, $|\paths(v)| = 1$ for each $v\in\Ebal$ and with~(\ref{eq:pathincrease}) the total cost of any path $P$ in $\Ebal$ is $\cost(P) = k\log(2^k-1)+\constcost(P) = \Theta(k^2)$.
When minimizing the total cost of $\Ebal$, the precision increase $\iv$ at a node $v\in\Ebal$ is weighted against the cost induced in all operators in its subexpression and therefore logarithmic in their number. The cost induced on the critical path, however, depends on the depth of the subexpression. Building upon this observation, the cost of the critical path in $\Ebal$ can be reduced. For a node $v\in\Ebal$ with subexpression depth $j$ and outgoing edges $\el,\er$ we set $\iv=-\log(j)$ and $\il-\log(\cx)=\ir-\log(\cy) = \log(j-1)-\log(j)-1$ (cf.~(\ref{eq:parameterchoice})). Then the cost of the critical path $P$ in $\Ebal$ is
\begin{align}
\cost(P) &= \textstyle-\sum_{j=2}^{k}(-\log(j)+(j-1)(\log(j-1)-\log(j)-1)) + \constcost(P) \nonumber\\
				  &= \textstyle k\log k + \frac{k(k-1)}{2}+\constcost(P)
\label{eq:cpcostbal}
\end{align}
It can be shown that this parameter choice is optimal, aside from taking the operation constants into account. Although not an asymptotic improvement, the cost of the critical path was cut nearly in half. In the derivation of the chosen parameters, we made use of the symmetry of the expression. In general it is hard to compute the optimal parameters for minimizing the critical path.
Let $v$ be the root node of an expression dag $X$ with outgoing edges $\el,\er$ where the left subexpression $L$ has depth $\dl\geq 1$ and the right subexpression $R$ has depth $\dr\geq 1$. In an optimal parameter choice we have 
\begin{align}
\label{eq:cpequality}
\cost(\cp(L)) - \dl\il = \cost(\cp(R)) - \dr\ir
\end{align}
Otherwise, $\il$ or $\ir$ could be decreased without increasing the cost of the critical path of $X$ and $\iv$ could be increased, reducing its cost.
Let $\fd=\frac{\dl}{\dr}$, let $\fc=\frac{\cost(\cp(L))-\cost(\cp(R))}{\dr}$ and let $\cc=2^\fc$. Then $\ir = \fd\il + \fc$ and with (\ref{eq:accuracycondition}) and $z=2^\il$ we get $\iv = \log(1-z-\cc z^\fd)$.
Due to (\ref{eq:cpequality}) there is a critical path through $\el$ and therefore
\[
\cost(\cp(X)) = \cost(\cp(L))-\dl\il-\iv 
\]
Substituting $\iv$ and forming the derivative with respect to $\il$ we get
\begin{align}
\label{eq:cpoptimal}
\frac{-z-\cc\fd z^\fd}{1-z-\cc z^\fd} - \dl = 0 \quad\Longleftrightarrow\quad \cc\frac{\dr-1}{\dr}z^\fd+\frac{\dl-1}{\dl}z-1 = 0
\end{align}
Solving this equation yields an optimal choice for $\il$ (and hence with (\ref{eq:cpequality}) and (\ref{eq:accuracycondition}) for $\ir$ and $\iv$). Note that for $\fd = 1$, $\fc = 0$ and $d_l=d_r$ we get the parameters used for $\Ebal$. Unfortunately, there is no closed form for the solution of (\ref{eq:cpoptimal}) for arbitrary $\fd$. Thus, for an implementation a numerical or a heuristic approach is needed.
\begin{figure}[bt]
\captionsetup[subfigure]{justification=centering}
\centering
\begin{subfigure}{0.33\linewidth}
\centering
\includegraphics[scale=0.8,page=1]{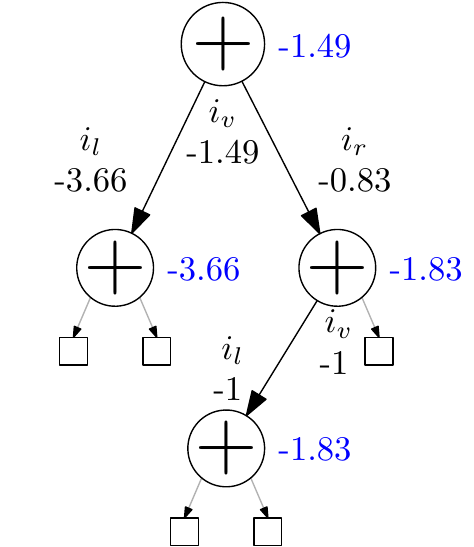}
\subcaption{Optimal path cost}
\label{sfig:optimalcp}
\end{subfigure}%
\begin{subfigure}{0.33\linewidth}
\centering
\includegraphics[scale=0.8,page=2]{balerrorcp}
\subcaption{Optimal total cost}
\label{sfig:optimaltotal}
\end{subfigure}
\begin{subfigure}{0.33\linewidth}
\centering
\includegraphics[scale=0.8,page=3]{balerrorcp}
\subcaption{Depth heuristic}
\label{sfig:depthheuristic}
\end{subfigure}
\caption{Error bound distribution for a graph with two paths of different lengths. In the optimal case, both paths have cost $5.15$. When minimizing total cost, the cost of the critical path is $6$, which gets reduced to $5.74$ with the depth heuristic.}
\label{fig:graphweightscp}
\end{figure}
The cost induced by operation constants and the initial accuracy usually increases with a higher depth. So it is plausible to assume for a node $v$ that the child with the higher subexpression depth will contain a more expensive path in the evaluation, if the difference in accuracy increase at $v$ is relatively small. We can use this observation in the following heuristic. We set
\begin{align}
\iv &= \ir = -\log(\dl+1)-1,\quad \il = \log(\dl)-\log(\dl+1), &\textrm{ if }\; \dl > \dr\nonumber \\
\iv &= \il = -\log(\dr+1)-1,\quad \ir = \log(\dr)-\log(\dr+1), &\textrm{ if }\; \dl < \dr \\
\iv &= -\log(\dl+1),\quad \il = \ir = \log(\dl)-\log(\dl+1)-1, &\textrm{ if }\; \dl = \dr \nonumber
\end{align}
Figure~\ref{fig:graphweightscp} shows an example for the differences between the critical path optimization, total cost optimization and the depth heuristic. The heuristic reduces the weight of the critical path compared to the previous strategies.

\section{Experiments}
\label{sec:experiments}

We present experiments to underline differences between restructuring (Section~\ref{ssc:restructuring}) and error bound balancing (Section~\ref{ssc:errorbalancing}). For the comparison, the policy-based exact-decisions number type \RealAlgebraic with multithreading is used~\cite{moerig10,wilhelm18techreport}.
We compare several different strategies. In our default configuration for \RealAlgebraic we use \binterval as floating-point filter and \mpt as bigfloat data type. Furthermore we always enable topological evaluation, bottom-up separation bound representation and error representation by exponents~\cite{moerig15macis,wilhelm18error}. We call the default strategy without balancing \df. For restructuring we use the weighted version of Brent's algorithm with unit weights~(\bre) and with setting the weights to the expression depth~(\brd). For error bound balancing we use the weight function counting all operators without removing duplicates~(\vdh) and the depth-based approach for reducing the length of critical paths~(\ebd). We furthermore test combinations of internal and external balancing as described in the respective sections. For every strategy we use a variant with and without multithreading~(\mt).
The experiments are performed on an Intel i7-4700MQ with 16GB RAM under Ubuntu 18.04, using \gpp~7.3.0, \boost~1.62.0 and \mpfr~4.0.1. All data points are averaged over twenty runs if not specified otherwise. All expressions are evaluated to an accuracy of $\q=-10000$.

\subsection{List-like expression dags}
\label{ssc:explistlike}

List-like expression dags with linear depth have quadratic cost (cf.~Section~\ref{sec:theory}). Both restructuring and error bound balancing should reduce the cost significantly in this case. We build an expression dag $\Elist$ by computing $res := res \circ a_i$ in a simple loop starting with $res=a_0$, where $\circ\in\lbrace +,-,*,/\rbrace$ is chosen randomly and uniformly and $a_i$ are operands ($0\leq i\leq n$). For the operands we choose random rationals, i.e., expressions of the form $a_i=d_{i,1}/d_{i,2}$ where $d_{i,j}\neq 0$ are random \double numbers exponentially distributed around $1$. By using exact divisions we assure that the operands have sufficient complexity for our experiments. To prevent them from being affected by restructuring, we assign an additional (external) reference to each operand.
Figure~\ref{fig:listlike} shows the results for evaluating $\Elist$. 
\begin{figure}[bht]
\centering
\begin{tikzpicture}
\begin{axis}[
	ymode=log]

\addplot coordinates { (1000,0.011) (2000,0.026) (3000,0.046) (4000,0.07) (5000,0.097) (6000,0.131) (7000,0.17) (8000,0.214) (9000,0.262) (10000,0.317) (11000,0.38) (12000,0.44) (13000,0.51) (14000,0.59) (15000,0.68) (16000,0.77) (17000,0.87) (18000,0.98) (19000,1.09) (20000,1.22) (21000,1.34) (22000,1.48) (23000,1.62) (24000,1.76) (25000,1.93) (26000,2.11) (27000,2.28) (28000,2.46) (29000,2.66) (30000,2.83) (31000,3.06) (32000,3.27) (33000,3.52) (34000,3.75) (35000,4) (36000,4.25) (37000,4.55) (38000,4.78) (39000,5.06) (40000,5.35) (41000,5.67) (42000,5.95) (43000,6.27) (44000,6.59) (45000,6.89) (46000,7.24) (47000,7.59) (48000,7.93) (49000,8.32) (50000,8.71)};
\addplot coordinates { (1000,0.014) (2000,0.031) (3000,0.05) (4000,0.074) (5000,0.1) (6000,0.133) (7000,0.17) (8000,0.215) (9000,0.264) (10000,0.318) (11000,0.37) (12000,0.44) (13000,0.51) (14000,0.58) (15000,0.67) (16000,0.76) (17000,0.87) (18000,0.97) (19000,1.08) (20000,1.19) (21000,1.33) (22000,1.45) (23000,1.59) (24000,1.75) (25000,1.89) (26000,2.05) (27000,2.26) (28000,2.4) (29000,2.56) (30000,2.76) (31000,3) (32000,3.19) (33000,3.42) (34000,3.67) (35000,3.88) (36000,4.13) (37000,4.38) (38000,4.63) (39000,4.9) (40000,5.18) (41000,5.46) (42000,5.75) (43000,6.04) (44000,6.36) (45000,6.64) (46000,7.03) (47000,7.34) (48000,7.69) (49000,8.11) (50000,8.41)};
\addplot coordinates { (1000,0.014) (2000,0.028) (3000,0.043) (4000,0.06) (5000,0.074) (6000,0.09) (7000,0.107) (8000,0.124) (9000,0.142) (10000,0.158) (11000,0.17) (12000,0.18) (13000,0.2) (14000,0.22) (15000,0.23) (16000,0.25) (17000,0.27) (18000,0.29) (19000,0.3) (20000,0.32) (21000,0.34) (22000,0.36) (23000,0.37) (24000,0.39) (25000,0.41) (26000,0.43) (27000,0.45) (28000,0.46) (29000,0.49) (30000,0.5) (31000,0.52) (32000,0.54) (33000,0.56) (34000,0.58) (35000,0.59) (36000,0.62) (37000,0.63) (38000,0.66) (39000,0.67) (40000,0.69) (41000,0.71) (42000,0.73) (43000,0.75) (44000,0.77) (45000,0.78) (46000,0.8) (47000,0.83) (48000,0.83) (49000,0.85) (50000,0.88)};
\addplot coordinates { (1000,0.009) (2000,0.016) (3000,0.025) (4000,0.035) (5000,0.043) (6000,0.055) (7000,0.067) (8000,0.077) (9000,0.089) (10000,0.1) (11000,0.1) (12000,0.11) (13000,0.12) (14000,0.13) (15000,0.14) (16000,0.15) (17000,0.16) (18000,0.17) (19000,0.18) (20000,0.19) (21000,0.2) (22000,0.22) (23000,0.23) (24000,0.24) (25000,0.25) (26000,0.26) (27000,0.27) (28000,0.29) (29000,0.29) (30000,0.31) (31000,0.32) (32000,0.32) (33000,0.34) (34000,0.35) (35000,0.37) (36000,0.38) (37000,0.39) (38000,0.41) (39000,0.41) (40000,0.42) (41000,0.43) (42000,0.44) (43000,0.45) (44000,0.47) (45000,0.48) (46000,0.49) (47000,0.5) (48000,0.51) (49000,0.52) (50000,0.54)};
\pgfplotsset{cycle list shift=2}
\addplot coordinates { (1000,0.009) (2000,0.02) (3000,0.032) (4000,0.047) (5000,0.063) (6000,0.079) (7000,0.098) (8000,0.114) (9000,0.136) (10000,0.155) (11000,0.17) (12000,0.19) (13000,0.22) (14000,0.25) (15000,0.27) (16000,0.3) (17000,0.33) (18000,0.36) (19000,0.39) (20000,0.42) (21000,0.46) (22000,0.49) (23000,0.53) (24000,0.57) (25000,0.61) (26000,0.66) (27000,0.69) (28000,0.74) (29000,0.78) (30000,0.83) (31000,0.88) (32000,0.93) (33000,0.99) (34000,1.04) (35000,1.09) (36000,1.15) (37000,1.2) (38000,1.27) (39000,1.33) (40000,1.39) (41000,1.46) (42000,1.53) (43000,1.59) (44000,1.65) (45000,1.73) (46000,1.8) (47000,1.89) (48000,1.95) (49000,2.02) (50000,2.12)};
\addplot coordinates { (1000,0.011) (2000,0.024) (3000,0.038) (4000,0.053) (5000,0.068) (6000,0.085) (7000,0.105) (8000,0.124) (9000,0.143) (10000,0.166) (11000,0.18) (12000,0.2) (13000,0.23) (14000,0.25) (15000,0.28) (16000,0.31) (17000,0.34) (18000,0.37) (19000,0.4) (20000,0.43) (21000,0.46) (22000,0.51) (23000,0.54) (24000,0.57) (25000,0.61) (26000,0.66) (27000,0.7) (28000,0.74) (29000,0.79) (30000,0.84) (31000,0.89) (32000,0.94) (33000,0.98) (34000,1.04) (35000,1.09) (36000,1.15) (37000,1.22) (38000,1.27) (39000,1.32) (40000,1.38) (41000,1.46) (42000,1.52) (43000,1.6) (44000,1.66) (45000,1.72) (46000,1.79) (47000,1.85) (48000,1.93) (49000,2) (50000,2.07)};

\legend{\df,\mdf,\bre,\mbre,\vdh,\mvdh}
\end{axis}

\end{tikzpicture}
\caption{Running times on a list-like expression dag. Restructuring reduces times by up to \SI{90}{\percent} for single-threaded and by up to \SI{94}{\percent} for multithreaded evaluation. Error bound balancing reduces the running time by up to \SI{75}{\percent} in both cases.}
\label{fig:listlike}
\end{figure}
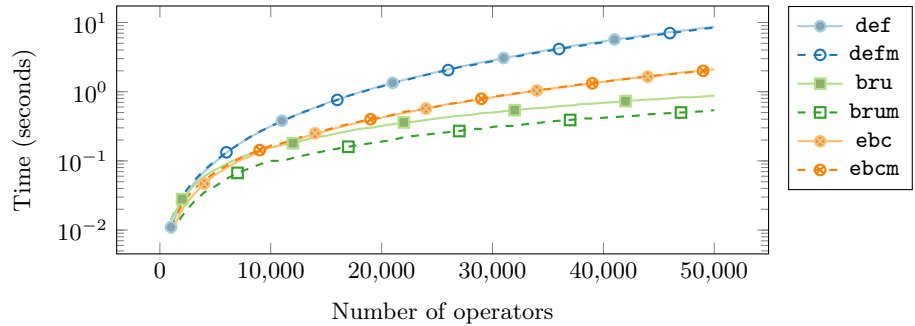
Both balancing methods lead to a significant reduction in running time compared to the default configuration (note the logarithmic scale). For large numbers of operators, restructuring is superior to error bound balancing. While error bound balancing optimizes the variable precision increase, it does not reduce the cost associated with the operation constants. The precision increase due to operation constants affects more nodes in an unbalanced structure than in a balanced one, which gives restructuring an advantage. For small numbers of operators, error bound balancing leads to better results than restructuring, since the cost of evaluating additional operators created through restructuring becomes more relevant.
The structure of $\Elist$ is highly detrimental to efficient parallelization. Consequently, neither the default evaluation nor the error bound balanced evaluation show significant cost reduction when run on multiple processors. With Brent's algorithm a speedup of about $1.7$, i.e., a runtime reduction of about \SI{40}{\percent}, can be observed.
Since $\Elist$ does not contain any common subexpressions or other barriers, the results for other restructuring or error bound balancing strategies are indistinguishable from their counterparts. Interestingly, the evaluation does not benefit from a combination of both balancing strategies. Instead the results closely resemble the results obtained by using only restructuring and even get a bit worse in the multithreaded case. Since through restructuring a perfectly balanced dag is created, the default error bounds are already close to optimal (cf.~Section~\ref{ssc:expbalanced}).

\subsection{Blocking nodes}
\label{ssc:expblocking}

Restructuring gets difficult as soon as `blocking nodes', such as nodes with multiple parents, occur in the expression dag (cf.~Section~\ref{ssc:restructuring}). We repeat the experiment from Section~\ref{ssc:explistlike}, but randomly let about \SI{30}{\percent} of the operator nodes be blocking nodes by adding an additional parent (which is not part of our evaluation). Nodes with such a parent cannot be part of a restructuring process, since the subexpressions associated with them might be used somewhere else and therefore cannot be destroyed.
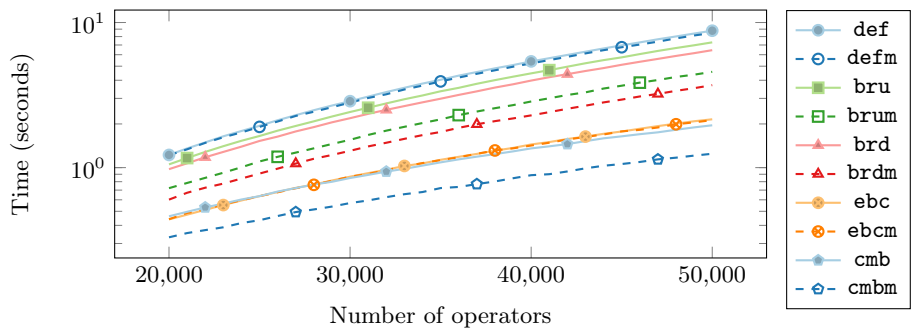
\begin{figure}[hbt]
\centering
\begin{tikzpicture}
\begin{axis}[
	ymode=log]
	
\addplot coordinates { (20000,1.223) (21000,1.356) (22000,1.492) (23000,1.644) (24000,1.788) (25000,1.942) (26000,2.121) (27000,2.298) (28000,2.487) (29000,2.678) (30000,2.87) (31000,3.086) (32000,3.317) (33000,3.55) (34000,3.795) (35000,4.031) (36000,4.305) (37000,4.556) (38000,4.833) (39000,5.091) (40000,5.396) (41000,5.677) (42000,5.977) (43000,6.294) (44000,6.627) (45000,6.959) (46000,7.295) (47000,7.666) (48000,8.017) (49000,8.395) (50000,8.78)};
\addplot coordinates { (20000,1.217) (21000,1.337) (22000,1.481) (23000,1.619) (24000,1.748) (25000,1.909) (26000,2.079) (27000,2.271) (28000,2.427) (29000,2.607) (30000,2.806) (31000,3.02) (32000,3.202) (33000,3.456) (34000,3.676) (35000,3.928) (36000,4.176) (37000,4.433) (38000,4.686) (39000,4.959) (40000,5.228) (41000,5.509) (42000,5.802) (43000,6.126) (44000,6.448) (45000,6.756) (46000,7.049) (47000,7.384) (48000,7.773) (49000,8.142) (50000,8.475)};
\addplot coordinates { (20000,1.053) (21000,1.164) (22000,1.283) (23000,1.405) (24000,1.531) (25000,1.652) (26000,1.798) (27000,1.937) (28000,2.085) (29000,2.25) (30000,2.417) (31000,2.586) (32000,2.77) (33000,2.96) (34000,3.142) (35000,3.359) (36000,3.547) (37000,3.783) (38000,4.008) (39000,4.233) (40000,4.463) (41000,4.696) (42000,4.961) (43000,5.239) (44000,5.506) (45000,5.752) (46000,6.06) (47000,6.372) (48000,6.672) (49000,6.969) (50000,7.3)};
\addplot coordinates { (20000,0.722) (21000,0.784) (22000,0.85) (23000,0.922) (24000,1.006) (25000,1.103) (26000,1.187) (27000,1.269) (28000,1.358) (29000,1.455) (30000,1.555) (31000,1.68) (32000,1.798) (33000,1.911) (34000,2.029) (35000,2.151) (36000,2.301) (37000,2.435) (38000,2.568) (39000,2.696) (40000,2.856) (41000,3.006) (42000,3.167) (43000,3.301) (44000,3.508) (45000,3.669) (46000,3.853) (47000,4.015) (48000,4.216) (49000,4.371) (50000,4.572)};
\addplot coordinates { (20000,0.976) (21000,1.071) (22000,1.179) (23000,1.283) (24000,1.395) (25000,1.527) (26000,1.636) (27000,1.776) (28000,1.903) (29000,2.052) (30000,2.191) (31000,2.347) (32000,2.501) (33000,2.655) (34000,2.818) (35000,2.985) (36000,3.17) (37000,3.355) (38000,3.562) (39000,3.753) (40000,3.981) (41000,4.198) (42000,4.406) (43000,4.634) (44000,4.853) (45000,5.114) (46000,5.363) (47000,5.615) (48000,5.885) (49000,6.15) (50000,6.448)};
\addplot coordinates { (20000,0.602) (21000,0.671) (22000,0.727) (23000,0.776) (24000,0.841) (25000,0.91) (26000,0.984) (27000,1.066) (28000,1.151) (29000,1.222) (30000,1.306) (31000,1.4) (32000,1.484) (33000,1.579) (34000,1.675) (35000,1.761) (36000,1.881) (37000,1.99) (38000,2.091) (39000,2.183) (40000,2.291) (41000,2.412) (42000,2.544) (43000,2.672) (44000,2.815) (45000,2.943) (46000,3.093) (47000,3.22) (48000,3.383) (49000,3.528) (50000,3.702)};
\addplot coordinates { (20000,0.44) (21000,0.474) (22000,0.512) (23000,0.552) (24000,0.594) (25000,0.636) (26000,0.679) (27000,0.725) (28000,0.764) (29000,0.815) (30000,0.866) (31000,0.916) (32000,0.967) (33000,1.027) (34000,1.069) (35000,1.132) (36000,1.187) (37000,1.243) (38000,1.309) (39000,1.367) (40000,1.444) (41000,1.5) (42000,1.575) (43000,1.636) (44000,1.71) (45000,1.774) (46000,1.853) (47000,1.933) (48000,2.01) (49000,2.087) (50000,2.153)};
\addplot coordinates { (20000,0.441) (21000,0.487) (22000,0.515) (23000,0.559) (24000,0.596) (25000,0.636) (26000,0.676) (27000,0.721) (28000,0.761) (29000,0.811) (30000,0.866) (31000,0.905) (32000,0.96) (33000,1.018) (34000,1.075) (35000,1.126) (36000,1.178) (37000,1.245) (38000,1.313) (39000,1.358) (40000,1.42) (41000,1.485) (42000,1.563) (43000,1.628) (44000,1.692) (45000,1.778) (46000,1.821) (47000,1.892) (48000,1.99) (49000,2.051) (50000,2.125)};
\pgfplotsset{cycle list shift=2}
\addplot coordinates { (20000,0.462) (21000,0.496) (22000,0.532) (23000,0.565) (24000,0.603) (25000,0.636) (26000,0.684) (27000,0.723) (28000,0.765) (29000,0.799) (30000,0.848) (31000,0.891) (32000,0.939) (33000,0.981) (34000,1.033) (35000,1.087) (36000,1.14) (37000,1.19) (38000,1.239) (39000,1.294) (40000,1.357) (41000,1.402) (42000,1.453) (43000,1.522) (44000,1.592) (45000,1.643) (46000,1.7) (47000,1.751) (48000,1.824) (49000,1.894) (50000,1.958)};
\addplot coordinates { (20000,0.331) (21000,0.354) (22000,0.371) (23000,0.388) (24000,0.415) (25000,0.434) (26000,0.464) (27000,0.494) (28000,0.515) (29000,0.539) (30000,0.57) (31000,0.594) (32000,0.626) (33000,0.656) (34000,0.679) (35000,0.723) (36000,0.736) (37000,0.77) (38000,0.803) (39000,0.845) (40000,0.887) (41000,0.901) (42000,0.941) (43000,0.989) (44000,1.03) (45000,1.059) (46000,1.094) (47000,1.14) (48000,1.171) (49000,1.21) (50000,1.246)};

\legend{\df,\mdf,\bre,\mbre,\brd,\mbrd,\vdh,\mvdh,\wbvdh,\mwbvdh}
\end{axis}

\end{tikzpicture}
\caption{Running times on a list-like expression dag where \SI{30}{\percent} of the operators have an additional reference. Error bound balancing is not affected by the references, restructuring performs much worse. Combining error bound balancing and restructuring leads to the best results for multithreading.}
\label{fig:blocked}
\end{figure}
Both the default and the internal balancing method are not affected by the change and thus show the same results as before. Restructuring on the other hand performs worse and falls back behind error bound balancing (cf.~Figure~\ref{fig:blocked}). The depth heuristic leads to fewer losses for both total and parallel running time. It reduces the running time by about \SI{10}{\percent} in single-threaded and about \SI{20}{\percent} in multithreaded execution compared to using unit weights.
Combining internal and external balancing combines the advantages of both strategies in this case. Exemplarly, a combination of \brd and \vdh, named \wbvdh, is shown in Figure~\ref{fig:blocked}. For serial evaluation the running time of the combined approach mostly resembles the running time of error bound balancing, getting slightly faster for a large number of operators (about~\SI{9}{\percent} for $N=50000$). In parallel, however, it strongly increases parallelizability leading to a speedup of $1.6$ and a total runtime reduction of up to \SI{85}{\percent} compared to the default strategy.

\subsection{Balanced expression dags}
\label{ssc:expbalanced}

When an expression dag is already balanced, there is not much to gain by either balancing method. In a perfectly balanced expression dag $\Ebal$, restructuring cannot reduce the depth and therefore does not reduce its cost, neither in serial nor in parallel. Brent's algorithm still creates a normal form, which adds additional operations and might even increase the maximum depth. Error bound balancing on the other hand can potentially make a difference.
For a balanced expression dag the total cost is strongly influenced by the operation constants, which is reflected in a high variance when choosing the operators at random. In the experiment shown in Figure~\ref{fig:balanced}, we increase the number of test sets for each data point from $20$ to $50$ and use the same test data for each number type. The single data points lie in a range of about $\pm\SI{20}{\percent}$ of the respective average.
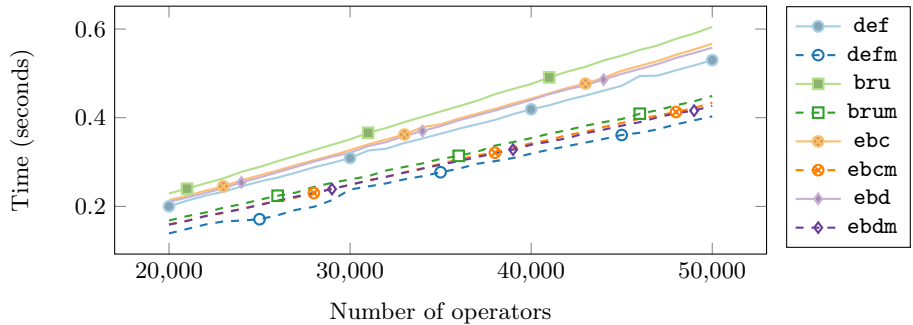
\begin{figure}[tbh]
\centering
\begin{tikzpicture}
\begin{axis}

\addplot coordinates { (20000,0.2) (21000,0.213) (22000,0.224) (23000,0.233) (24000,0.245) (25000,0.256) (26000,0.265) (27000,0.276) (28000,0.288) (29000,0.298) (30000,0.309) (31000,0.326) (32000,0.33) (33000,0.343) (34000,0.353) (35000,0.364) (36000,0.375) (37000,0.385) (38000,0.395) (39000,0.408) (40000,0.419) (41000,0.428) (42000,0.44) (43000,0.45) (44000,0.461) (45000,0.472) (46000,0.494) (47000,0.495) (48000,0.507) (49000,0.518) (50000,0.53)};
\addplot coordinates { (20000,0.139) (21000,0.149) (22000,0.159) (23000,0.166) (24000,0.168) (25000,0.171) (26000,0.18) (27000,0.192) (28000,0.199) (29000,0.214) (30000,0.238) (31000,0.245) (32000,0.252) (33000,0.261) (34000,0.268) (35000,0.277) (36000,0.285) (37000,0.296) (38000,0.302) (39000,0.309) (40000,0.319) (41000,0.328) (42000,0.336) (43000,0.344) (44000,0.352) (45000,0.361) (46000,0.366) (47000,0.374) (48000,0.385) (49000,0.394) (50000,0.403)};
\addplot coordinates { (20000,0.229) (21000,0.24) (22000,0.252) (23000,0.263) (24000,0.278) (25000,0.289) (26000,0.302) (27000,0.314) (28000,0.327) (29000,0.339) (30000,0.351) (31000,0.366) (32000,0.377) (33000,0.39) (34000,0.402) (35000,0.414) (36000,0.426) (37000,0.438) (38000,0.453) (39000,0.465) (40000,0.476) (41000,0.491) (42000,0.504) (43000,0.515) (44000,0.529) (45000,0.54) (46000,0.553) (47000,0.563) (48000,0.578) (49000,0.59) (50000,0.605)};
\addplot coordinates { (20000,0.168) (21000,0.178) (22000,0.186) (23000,0.197) (24000,0.205) (25000,0.215) (26000,0.224) (27000,0.231) (28000,0.244) (29000,0.253) (30000,0.261) (31000,0.268) (32000,0.281) (33000,0.289) (34000,0.297) (35000,0.307) (36000,0.314) (37000,0.323) (38000,0.337) (39000,0.345) (40000,0.354) (41000,0.364) (42000,0.372) (43000,0.382) (44000,0.39) (45000,0.397) (46000,0.409) (47000,0.417) (48000,0.428) (49000,0.437) (50000,0.449)};
\pgfplotsset{cycle list shift=2}
\addplot coordinates { (20000,0.214) (21000,0.224) (22000,0.235) (23000,0.245) (24000,0.259) (25000,0.27) (26000,0.281) (27000,0.293) (28000,0.304) (29000,0.315) (30000,0.327) (31000,0.34) (32000,0.351) (33000,0.362) (34000,0.379) (35000,0.385) (36000,0.398) (37000,0.409) (38000,0.42) (39000,0.432) (40000,0.443) (41000,0.455) (42000,0.467) (43000,0.477) (44000,0.49) (45000,0.506) (46000,0.517) (47000,0.528) (48000,0.542) (49000,0.554) (50000,0.567)};
\addplot coordinates { (20000,0.159) (21000,0.168) (22000,0.178) (23000,0.185) (24000,0.194) (25000,0.204) (26000,0.212) (27000,0.22) (28000,0.23) (29000,0.24) (30000,0.248) (31000,0.259) (32000,0.266) (33000,0.277) (34000,0.286) (35000,0.296) (36000,0.306) (37000,0.315) (38000,0.321) (39000,0.333) (40000,0.342) (41000,0.35) (42000,0.361) (43000,0.368) (44000,0.379) (45000,0.388) (46000,0.398) (47000,0.405) (48000,0.413) (49000,0.422) (50000,0.434)};
\addplot coordinates { (20000,0.21) (21000,0.22) (22000,0.23) (23000,0.241) (24000,0.254) (25000,0.265) (26000,0.277) (27000,0.289) (28000,0.301) (29000,0.312) (30000,0.321) (31000,0.336) (32000,0.345) (33000,0.359) (34000,0.37) (35000,0.381) (36000,0.394) (37000,0.405) (38000,0.416) (39000,0.428) (40000,0.44) (41000,0.453) (42000,0.464) (43000,0.473) (44000,0.486) (45000,0.499) (46000,0.51) (47000,0.521) (48000,0.535) (49000,0.546) (50000,0.558)};
\addplot coordinates { (20000,0.159) (21000,0.167) (22000,0.178) (23000,0.186) (24000,0.194) (25000,0.203) (26000,0.213) (27000,0.222) (28000,0.23) (29000,0.239) (30000,0.249) (31000,0.258) (32000,0.267) (33000,0.276) (34000,0.285) (35000,0.294) (36000,0.302) (37000,0.311) (38000,0.32) (39000,0.328) (40000,0.339) (41000,0.346) (42000,0.354) (43000,0.365) (44000,0.373) (45000,0.382) (46000,0.39) (47000,0.401) (48000,0.41) (49000,0.416) (50000,0.427)};

\legend{\df,\mdf,\bre,\mbre,\vdh,\mvdh,\ebd,\mebd}
\end{axis}

\end{tikzpicture}
\caption{Running times on a perfectly balanced expression dag. All balancing approaches lead to a performance loss. Restructuring increases the running time by about \SI{15}{\percent}, error bound balancing by \SIrange{5}{7}{\percent} in the single-threaded case.}
\label{fig:balanced}
\end{figure}
As expected, restructuring performs worse than the default number type, doubling the depth and replacing each division by, on average, two multiplications. Error bound balancing performs worse than not balancing as well. Neither the total operator count, nor the depth-based strategy have a significant impact on the running time of the bigfloat operations, since the cost decrease per operation is at most logarithmic in the number of operators. For the same reason and due to the limited number of processors, the expected cost reduction between $\mvdh$ and $\mebd$ in the multithreaded case (cf.\ Section~\ref{ssc:errorbalancing}) can not be observed in the experimental data.

\subsection{Common Subexpressions}

Random expression dags, created by randomly applying operations on a forest of operands until it is reduced to a single DAG, tend to be balanced and therefore behave similarly to a perfectly balanced tree. This changes if common subexpressions are involved. With error bound balancing, common subexpressions can be recognized and the error bounds at the parent nodes can be adjusted, such that both request the same accuracies (cf.~Theorem~\ref{thm:errorboundcost}). The two implemented heuristics to some degree take common subexpressions into account, since they contribute the same weight to all of the subexpression's parents. 
We test the behavior of error bound balancing strategies by randomly reusing a certain percentage of subtrees during randomized bottom-up construction of the graph. To avoid zeros, ones, or an exponential explosion of the expression's value we only use additions if two subtrees are identical during construction. While error bound balancing still cannot outperform the default strategy due to the balanced nature, it moves on par with it. If \SI{5}{\percent} of the operations have more than one parent, the error bound balancing strategies improve the single-threaded running time by about \SIrange{1}{5}{\percent}. In a parallel environment, it still performs worse with \mebd being slightly superior to \mvdh.
\begin{figure}[ht]
\centering
\begin{tikzpicture}
\begin{axis}

\addplot coordinates { (20000,0.054) (21000,0.059) (22000,0.064) (23000,0.069) (24000,0.074) (25000,0.079) (26000,0.085) (27000,0.091) (28000,0.097) (29000,0.103) (30000,0.11) (31000,0.116) (32000,0.123) (33000,0.13) (34000,0.137) (35000,0.144) (36000,0.152) (37000,0.16) (38000,0.169) (39000,0.176) (40000,0.184) (41000,0.193) (42000,0.206) (43000,0.212) (44000,0.221) (45000,0.23) (46000,0.24) (47000,0.249) (48000,0.259) (49000,0.269) (50000,0.28)};
\addplot coordinates { (20000,0.062) (21000,0.065) (22000,0.071) (23000,0.076) (24000,0.081) (25000,0.086) (26000,0.092) (27000,0.098) (28000,0.105) (29000,0.11) (30000,0.117) (31000,0.124) (32000,0.131) (33000,0.138) (34000,0.146) (35000,0.154) (36000,0.161) (37000,0.169) (38000,0.177) (39000,0.185) (40000,0.194) (41000,0.202) (42000,0.211) (43000,0.22) (44000,0.23) (45000,0.24) (46000,0.25) (47000,0.26) (48000,0.27) (49000,0.28) (50000,0.29)};
\pgfplotsset{cycle list shift=4}
\addplot coordinates { (20000,0.056) (21000,0.06) (22000,0.065) (23000,0.07) (24000,0.075) (25000,0.081) (26000,0.087) (27000,0.093) (28000,0.099) (29000,0.105) (30000,0.112) (31000,0.119) (32000,0.126) (33000,0.133) (34000,0.14) (35000,0.147) (36000,0.155) (37000,0.163) (38000,0.171) (39000,0.18) (40000,0.188) (41000,0.197) (42000,0.205) (43000,0.215) (44000,0.224) (45000,0.234) (46000,0.244) (47000,0.254) (48000,0.263) (49000,0.274) (50000,0.284)};
\addplot coordinates { (20000,0.062) (21000,0.067) (22000,0.072) (23000,0.077) (24000,0.083) (25000,0.088) (26000,0.094) (27000,0.1) (28000,0.106) (29000,0.113) (30000,0.12) (31000,0.127) (32000,0.134) (33000,0.142) (34000,0.149) (35000,0.157) (36000,0.165) (37000,0.173) (38000,0.181) (39000,0.189) (40000,0.198) (41000,0.207) (42000,0.217) (43000,0.225) (44000,0.236) (45000,0.244) (46000,0.256) (47000,0.265) (48000,0.276) (49000,0.285) (50000,0.297)};
\addplot coordinates { (20000,0.042) (21000,0.046) (22000,0.049) (23000,0.054) (24000,0.056) (25000,0.06) (26000,0.063) (27000,0.068) (28000,0.072) (29000,0.076) (30000,0.08) (31000,0.086) (32000,0.089) (33000,0.094) (34000,0.099) (35000,0.104) (36000,0.109) (37000,0.113) (38000,0.12) (39000,0.124) (40000,0.129) (41000,0.134) (42000,0.143) (43000,0.146) (44000,0.152) (45000,0.16) (46000,0.164) (47000,0.17) (48000,0.176) (49000,0.183) (50000,0.189)};
\addplot coordinates { (20000,0.05) (21000,0.052) (22000,0.055) (23000,0.061) (24000,0.063) (25000,0.067) (26000,0.071) (27000,0.074) (28000,0.079) (29000,0.082) (30000,0.087) (31000,0.091) (32000,0.097) (33000,0.103) (34000,0.106) (35000,0.111) (36000,0.117) (37000,0.123) (38000,0.128) (39000,0.133) (40000,0.138) (41000,0.143) (42000,0.15) (43000,0.156) (44000,0.162) (45000,0.167) (46000,0.174) (47000,0.181) (48000,0.186) (49000,0.193) (50000,0.2)};

\legend{\df,\mdf,\vdh,\mvdh,\ebd,\mebd}
\end{axis}

\end{tikzpicture}
\caption{Running times on a series of self-additions as depicted in Figure~\ref{fig:graphweights}. Counting operators without removing duplicates does not improve on the default running time. The depth heuristic reduces the default running time by up to \SI{32}{\percent}.}
\label{fig:commonsubexp}
\end{figure}
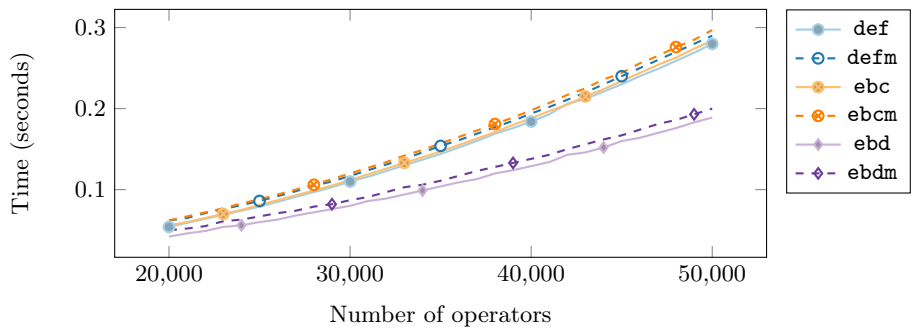

If common subexpressions lead to a large difference between the actual number of operators and the number of operators in a tree expansion, \vdh significantly overestimates the optimal weight of its edges (cf.\ Figure~\ref{fig:graphweights}). Figure~\ref{fig:commonsubexp} shows results for evaluating a sequence of additions where the left and the right summand is the result of the previous addition. The full operator count heuristic does not reduce the running time and even performs worse than the default strategy for large numbers of operators, whereas the depth-based heuristic clearly outperforms the other strategies. Note that in this case, the depth-based heuristic leads to the optimal error distribution for both total and critical path cost.

\subsection{A note on floating-point primitives}

Error bound balancing requires the use of floating-point error bounds. While \ieee requires that floating point computations must be exactly rounded, it is surprisingly difficult to find an adequate upper or lower bound to the result of such an operation. \ieee specifies four rounding modes: Round to nearest, Round to positive/negative infinty and Round to zero~\cite{ieee2008}. For the last three modes, which are commonly referred to as directed rounding, it is easy to obtain a lower or upper bound by negating the operands adequately. Unfortunately, most systems implement Round to nearest. Switching the rounding mode is expensive. While \double operations with appropriate negations for directed rounding are about two times slower, switching to an appropriate rounding mode can increase the running time of a single operation by a factor of $100$. The same factor applies if we manually jump to the next (or previous) representable \double value.

Handling floating-point primitives correctly can, depending on the architecture, be very expensive. In most cases, however, the computed error bounds massively overestimate the actual error. Moreover, for the actual bigfloats computations the error bounds are rounded up to the next integer. It is therefore almost impossible that floating-point rounding errors make an actual difference in any computation. For our experiments we refrained from handling those bounds correctly to make the results more meaningful and less architecture-dependent. 


\section{Conclusion}

We have shown, theoretically and experimentally, that both external and internal balancing methods are useful tools to mitigate the impact of badly balanced expression dags. Restructuring has a higher potential on reducing the cost, but can become useless or even detrimental if the graph has many common subexpressions or is already balanced. In a parallel environment, restructuring is necessary to make use of multiple processors in an unbalanced graph. Error bound balancing is more widely applicable, but is limited in its effectivity. If the graph is small or already sufficiently balanced, neither of the methods has a significant positive impact on the evaluation cost. A general purpose number type should therefore always check whether the structure generally requires balancing before applying either of the algorithms.
For both strategies we have described optimal weight functions. In both cases implementations require heuristics to be practicable. Our experiments show that carefully chosen heuristics are in most cases sufficient to increase the performance of exact number types.

\bibliographystyle{splncs04}
\bibliography{lit.bib}

\end{document}